\newcommand{\ket}[1]{\lvert #1 \rangle}
\newcommand{\bra}[1]{\langle #1 \rvert}
\newcommand{\ketbra}[2]{\ket{#1}\bra{#2}}
\newcommand{\braket}[2]{\langle #1 \rvert #2 \rangle}
\newcommand{\Tr}[1]{\operatorname{Tr}\bigl[#1\bigr]}
\newcommand{\abs}[1]{\lvert #1 \rvert}
\newcommand{\expect}[1]{\langle #1 \rangle}
\newtheorem{thm}{Theorem}
\newtheorem{cor}{Corollary}
\numberwithin{cor}{thm}
\begin{document}
\title{Single shot parameter estimation via continuous quantum measurement}
\author{Bradley A. Chase}
\email{bchase@unm.edu}
\author{JM Geremia}
\email{jgeremia@unm.edu}
\affiliation{Quantum Measurement \& Control Group, Department of Physics \& Astronomy, The University of New Mexico, Albuquerque, New Mexico 87131 USA}
\date{\today}
\begin{abstract}
	We present filtering equations for single shot parameter estimation using continuous quantum measurement.  By embedding parameter estimation in the standard quantum filtering formalism, we derive the optimal Bayesian filter for cases when the parameter takes on a finite range of values.  Leveraging recent convergence results [van Handel, arXiv:0709.2216 (2008)], we give a condition which determines the asymptotic convergence of the estimator.  For cases when the parameter is continuous valued, we develop \emph{quantum particle filters} as a practical computational method for quantum parameter estimation.  
\end{abstract}
\pacs{03.67.-a,02.30.Yy,06.00.00,42.50.Dv}
\maketitle
\section{Introduction}
Determining unknown values of parameters from noisy measurements is a ubiquitous problem in physics and engineering.    In quantum mechanics, the single-parameter problem is posed as determining a coupling parameter $\xi$ that controls the evolution of a probe quantum system via a Hamiltonian of the form $H_{\xi} = \xi H_0$ \cite{Helstrom:1976a,Holevo:1982a,Braunstein:1994a,Braunstein:1995a,Giovannetti:2004a,Giovannetti:2006a,Boixo:2007a}.  Traditionally, an estimation procedure proceeds by (i) preparing an ensemble of probe systems, either independently or jointly; (ii) evolving the ensemble under $H_{\xi}$; (iii) measuring an appropriate observable in order to infer $\xi$.  The quantum Cram\`{e}r-Rao bound \cite{Cramer:1946a,Helstrom:1976a,Holevo:1982a,Braunstein:1994a,Braunstein:1995a} gives the optimal sensitivity for \emph{any} possible estimator and much research has focused on achieving this bound in practice, using entangled probe states and nonlinear probe Hamiltonians \cite{Nagata:2007a,Pezze:2007a,Woolley:2008a}. 

Yet, it is often technically difficult to prepare the exotic states and Hamiltonians needed for improved sensitivity.    Instead, an experiment is usually repeated many times to build up sufficient statistics for the estimator.  In contrast, the burgeoning field of continuous quantum measurement \cite{Bouten:2006a} provides an opportunity for on-line \emph{single-shot} parameter estimation, in which an estimate is provided in near real-time using a measurement trajectory from a single probe system.  Parameter estimation via continuous measurement has been previously studied in the context of force estimation \cite{Verstraete:2001a} and magnetometry \cite{Geremia:2003a}.  Although Verstraete et. al develop a general framework for quantum parameter estimation, both of \cite{Verstraete:2001a,Geremia:2003a} focus on the readily tractable case when the dynamical equations are linear and the quantum states have Gaussian statistics.  In this case, the optimal estimator is the quantum analog of the classical Kalman filter \cite{Belavkin:1999a,Kalman:1960a,Kalman:1961a}.

In this paper, we develop on-line estimators for continuous measurement when the dynamics and states are not restricted.  Rather than focusing on fundamental quantum limits, we instead consider the more basic problem of developing an actual parameter filter for use with continuous quantum measurements.  By embedding parameter estimation in the standard quantum filtering formalism \cite{Bouten:2006a}, we construct the optimal Bayesian estimator for parameters drawn from a finite dimensional set.  The resulting filter is a generalized form of one derived by Jacobs for binary state discrimination \cite{Jacobs:2006a}.  Using recent stability results of van Handel \cite{vanHandel:2008a}, we give a simple check for whether the estimator can successfully track to the true parameter value in an asymptotic time limit.  For cases when the parameter is continuous valued, we develop \emph{quantum particle filters} as a practical computational method for quantum parameter estimation.  These are analogous to, and inspired by, particle filtering methods that have had much success in classical filtering theory \cite{Doucet:2001,Arulampalam:2002a}.  Although the quantum particle filter is necessarily sub-optimal, we present numerical simulations which suggest they perform well in practice.  Throughout, we demonstrate our techniques using a single qubit magnetometer.  

The remainder of the paper is organized as follows.  Section \ref{sec:quantum_filtering} reviews quantum filtering theory.  Section \ref{sec:finite} develops the estimator and stability results for a parameter from a finite-dimensional set.  Section \ref{sec:infinite} presents the quantum particle filtering algorithm, which is appropriate for estimation of a continuous valued parameters. Section \ref{sec:conclude} concludes.

%%%%%%%%%%%%%%%%%%%%%
\section{Quantum filtering}
\label{sec:quantum_filtering}
In this section, we review the notation and features of quantum filtering and quantum stochastic calculus, predominantly summarizing the presentation in \cite{Bouten:2006a}, which provides a more complete introduction.  In the general quantum filtering problem, we consider a continuous-stream of probe quantum systems interacting with a target quantum system.  The probes are subsequently measured and provide a continuous stream of measurement outcomes.  The task of quantum filtering is to provide an estimate of the state of the target system given these indirect measurements.  In the quantum optics setting, the target system is usually a collection of atomic systems, with Hilbert space $\mathcal{H}_A$ and associated space of operators $\mathcal{A}$.  The probe is taken to be a single mode of the quantum electromagnetic field, from which vacuum fluctuations give rise to white noise statistics.

In the limit of weak atom-field coupling, the joint atom-field evolution is described by the following quantum stochastic differential equation (QSDE)
\begin{equation} \label{eq:stochastic_propogator}
	dU_t = \left( L dA_t^{\dag} - L^{\dag} dA_t  - \frac{1}{2}L^{\dag}L dt - iHdt\right)U_t ,
\end{equation}
where $L \in \mathcal{A}$ is an atomic operator that describes the atom-field interaction and $H \in \mathcal{A}$ is the atomic Hamiltonian.  The interaction-picture field operators $dA_t,dA_t^{\dag}$ are quantum white noise processes with a single non-zero It\^{o} product $dA_tdA_t^{\dag} = dt$.

For any atomic observable $X_A \in \mathcal{A}$, the Heisenberg evolution or quantum flow is defined as $(X_A)_t = j_t(X_A) = U_t^{\dag}(X_A\otimes I ) U_t$.  Application of the It\^{o} rules gives the time evolution as
\begin{equation} \label{eq:quantum_flow}
	dj_t(X_A) = j_t(\mathcal{L}[X_A])dt + j_t([L^{\dag},X_A])dA_t + j_t([X_A,L])dA_t^{\dag}
\end{equation}
with Lindblad generator
\begin{equation} \label{eq:lindblad_generator}
	\mathcal{L}[X_A] = i[H,X_A] + L^{\dag}X_AL -\frac{1}{2}L^{\dag}LX_A - \frac{1}{2}X_AL^{\dag}L .
\end{equation}
Similarly, the observation process, which we take to be homodyne detection of the scattered field, is given by $M_t = U_t^{\dag}(A_t + A_t^{\dag})U_t$.  The It\^{o} rules give the corresponding time evolution
\begin{equation} \label{eq:measurements}
	dM_t = j_t(L + L^{\dag})dt + dA_t + dA_t^{\dag} .
\end{equation}  
Together, \eqref{eq:quantum_flow} and \eqref{eq:measurements} are the system-observation pair which define the filtering problem.  The quantum flow describes our knowledge of how atomic observables evolve exactly under the joint propagator in \eqref{eq:stochastic_propogator}, but it is inaccessible since the system is not directly observed.  Nonetheless, the scattered fields as measured in \eqref{eq:measurements} carry information about the atomic system, providing a continuous measurement of the observable $L + L^{\dag}$, albeit corrupted by quantum noise.  The quantum filtering problem is to find $\pi_t[X_A] = \mathbbm{E}(j_t(X_A)|M_{[0,t]})$, the best estimate (in a least squares sense) of an atomic observable conditioned on the measurement record.  We invite the reader to consult \cite{vanHandel:2005a,Bouten:2006a} for details on deriving the recursive form of this filter, which is governed by the (classical) stochastic differential equation (SDE)
\begin{multline} \label{eq:PiFilter}
	d\pi_t[X_A] = \pi_t[\mathcal{L}[X_A]]dt + 
		\left(\pi_t[L^{\dag}X_A + X_AL] \right.\\
		\left. - \pi_t[L^{\dag} + L]\pi_t[X_A]\right)
				\times (dM_t - \pi_t[L + L^{\dag}]dt) .
\end{multline}
We see that this is an entirely classical filter, driven by the classical measurement stream $Y_t$.  Oftentimes, it is more convenient to work with the adjoint form of the equation for $\rho_t$, which satisfies $\Tr{X_A\rho_t} = \pi_t[X_A]$ for all $X_A \in \mathcal{A}$.  The state $\rho_t$ is often called the conditional density matrix.  The SDE or stochastic master equation (SME) for $\rho_t$ is then
\begin{multline} \label{eq:RhoFilter}
	d\rho_t = -i[H,\rho_t]dt + (L\rho_tL^{\dag} - \frac{1}{2}L^{\dag}L\rho_t - \frac{1}{2}\rho_tL^{\dag}L)dt\\
			+ (L\rho_t + \rho_tL^{\dag} - \Tr{(L+L^{\dag})\rho_t}\rho_t)dW_t
\end{multline}
where the \emph{innovations process}, $dW_t = dM_t - \Tr{(L+L^{\dag})\rho}dt$, is a Wiener process that satisfies $\mathbbm{E}[dW_t] = 0 $ and It\^{o} rule $(dW_t)^2 = dt$.
%%%%%%%%%%%
\subsubsection*{Qubit Example} \label{subsec:single_qubit}
\begin{figure}[ht]
	\centering
		\includegraphics{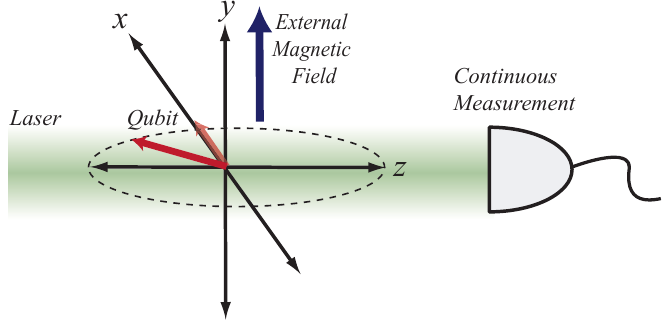}
	\caption{Continuous-measurement of single qubit precessing in an external magnetic field}
	\label{fig:schematic}
\end{figure}
\begin{figure}[t]
	\centering
		\includegraphics[scale=1]{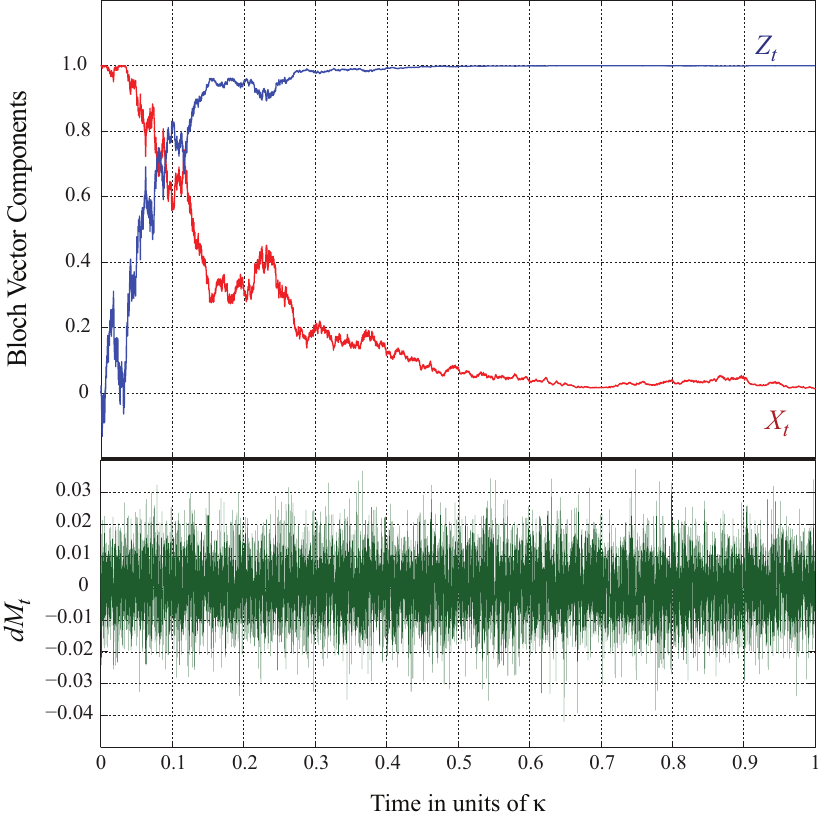}
	\caption{(Bottom) Simulated typical measurement trajectory for continuous $Z$ measurement, $\kappa = 1$, $B = 0$ (Top) Filtered values of $\pi_t[\sigma_x]$ and $\pi_t[\sigma_z]$ for simulated trajectory}
	\label{fig:typical_trajectory}
\end{figure}
Consider the setup depicted in Figure \ref{fig:schematic}.  A qubit, initially in the pure state $\ket{+x}$, precesses about a magnetic field $B$ while undergoing a continuous measurement along $z$.  In terms of the general framework, $H = B \sigma_y$ and $L = \sqrt{\kappa}\sigma_z$, where $\sqrt{\kappa}$ is the continuous measurement strength in the weak coupling limit.  We will not dwell on the underlying physical mechanism which gives rise to the $\sigma_z$ measurement, though continuous polarimetry measurements could suffice \cite{Bouten:2007a}.  Plugging into \eqref{eq:PiFilter}, the quantum filter for the Bloch vector $n_t = (\pi_t[\sigma_x],\pi_t[\sigma_y],\pi_t[\sigma_z])$ is 
\begin{align}
	d\pi_t[\sigma_x] &= 2B\pi_t[\sigma_z] dt - 2\kappa \pi_t[\sigma_x] dt
					  - 2\sqrt{\kappa}\pi_t[\sigma_x]\pi_t[\sigma_z] dW_t\\
	d\pi_t[\sigma_y] &= -2M\pi_t[\sigma_y] dt - 2\sqrt{\kappa}\pi_t[\sigma_x]\pi_t[\sigma_y] dW_t\\
	d\pi_t[\sigma_z] &= -2B\pi_t[\sigma_x]dt  + 2\sqrt{\kappa}( 1 -  \pi_t[\sigma_z]^2)dW_t
\end{align}
with innovations $dW_t = dM_t - 2\sqrt{\kappa}\pi_t[\sigma_z] dt$.  It is not difficult to verify that the quantum filter maintains pure states and that the initial state $n_0 = (1,0,0)$ remains on the Bloch circle in the $x$-$z$ plane.  Letting $\theta$ be the angle from the positive $x$-axis such that $\tan{\theta} = \pi_t[\sigma_z]/\pi_t[\sigma_x]$, we then simplify the filter to
\begin{equation} \label{eq:known_qubit_filter}
	d\theta_t = -2Bdt + \kappa \sin(2\theta_t)dt + 2\sqrt{\kappa}\cos(\theta_t)dW_t 
\end{equation}
where now $dW_t = dM_t - 2 \sqrt{\kappa}\sin\theta_t$.  Figure \ref{fig:typical_trajectory} shows a computer simulation of a typical measurement trajectory and filtered Bloch vector values when $B = 0$.  Note that in the absence of a magnetic field, the steady-states are the $\sigma_z$ eigenstates, which are reached with probabilities given by the Born rule \cite{Adler:2001a}. 
%%%%%%%%%%%%
\section{Estimation of a parameter from a finite set}
\label{sec:finite}
Using the atomic system as a probe for the unknown parameter $\xi$ \footnote{This is not to be confused with the notion of the  electromagnetic field as a probe for indirect measurements of the atomic system.}, we set the atomic Hamiltonian of the quantum filter to 
\begin{equation} \label{eq:start_hamiltonian}
	H = \xi H_0 , \quad  H_0 \in \mathcal{A} .
\end{equation}
Supposing we knew the true value of the parameter, the quantum filtering equations would give us the best least-squares estimate of the atomic system conditioned on the measurements and the knowledge of dynamics induced by $\xi$ through $H$.  But given the optimality of the filter, we could equally well embed the parameter $\xi$ as a diagonal operator $\Xi$ acting on an auxiliary quantum space, after which the filter \emph{still} gives the best estimate of \emph{both} system and auxiliary space operators.  Finding the best estimate of $\xi$ conditioned on the measurements simply corresponds to integrating the equations for $\pi_t[\Xi]$.

More precisely, extend the atomic Hilbert space $\mathcal{H}_A \mapsto \mathcal{H}_{\xi}\otimes\mathcal{H}_A$ and the operator space $\mathcal{A} \mapsto \mathfrak{D}(\mathcal{H}_{\xi})\otimes\mathcal{A}$, where $\mathfrak{D}(\mathcal{H}_{\xi})$ is the set of diagonal operators on $\mathcal{H}_{\xi}$.  Assuming $\xi$ takes on $N$ possible values $\{\xi_1,\ldots, \xi_N\}$, $\dim{\mathfrak{D}(\mathcal{H}_{\xi})} = N$.   Introduce the diagonal operator 
\begin{equation}
	\mathfrak{D}(\mathcal{H}_{\xi}) \ni \Xi = \sum_{i=1}^{N} \xi_i \ketbra{\xi_i}{\xi_i}
\end{equation}
so that $\Xi\ket{\xi_i} = \xi_i\ket{\xi_i}$ with $\ket{\xi_i} \in \mathcal{H}_{\xi}$.  This allows one to generalize \eqref{eq:start_hamiltonian} as 
\begin{equation}
	H \mapsto \Xi\otimes H_0 \in \mathfrak{D}(\mathcal{H}_{\xi})\otimes\mathcal{A} .
\end{equation}
Any remaining atomic operators $X_A \in \mathcal{A}$ act as the identity on the auxiliary space, i.e. $I\otimes X_A$.  
Given these definitions, the derivation of the quantum filtering equation remains essentially unchanged, so that the filter in either the operator form of \eqref{eq:PiFilter} or the adjoint form of \eqref{eq:RhoFilter} is simply updated with the extended forms of operators given in the last paragraph.  

Since $\xi$ is a classical parameter, we require that the reduced conditional density matrix $(\rho_{\xi})_t = \operatorname{Tr}_{\mathcal{H}_A}{(\rho_t)}$ be diagonal in the basis of $\Xi$.  Thus we can write
\begin{equation} \label{eq:reduced_state}
	(\rho_{\xi})_t = \sum_{i=1}^N p_t^{(i)} \ketbra{\xi_i}{\xi_i}
\end{equation}
where
\begin{multline}
	p_t^{(i)} \equiv \Tr{(\ketbra{\xi_i}{\xi_i} \otimes I)\rho_t} 
		\equiv \pi_t[\ketbra{\xi_i}{\xi_i}\otimes I]\\
		\equiv \mathbbm{E}[\ketbra{\xi_i}{\xi_i} \otimes I | M_{[0,t]}] 
		\equiv P(\xi = \xi_i | M_{[0,t]}) .
\end{multline}
Then $p_t^{(i)}$ is precisely the conditional probability for $\xi$ to have the value $\xi_i$ and the set $\{p_t^{(i)}\}$ gives the discrete conditional distribution of the random variable represented by $\Xi$.  Similarly, by requiring operators to be diagonal in $\mathcal{H}_{\xi}$, we ensure that they correspond to classical random variables.  In short, we have simply embedded filtering of a truly classical random variable in the quantum formalism.

The fact that both states and operators are diagonal in the auxiliary space suggests using an ensemble form for filtering.  As such, consider an ensemble consisting of a weighted set of $N$ conditional atomic states,  each state evolved under a different $\xi_i$.  Later, in section \ref{sec:infinite}, we will call each ensemble member a \emph{quantum particle}. For now, we explicitly write the conditional quantum state as
\begin{equation} \label{eq:finitedim:ensembleform}
	\rho_t^{E} = \sum_{i = 1}^{N} p^{(i)}_t \ketbra{\xi_i}{\xi_i} \otimes \rho^{(i)}_t 
\end{equation}
where $\rho^{(i)}_t$ is a density matrix on $\mathcal{H}_A$.  The reduced state, $\operatorname{Tr}_{\mathcal{H}_A}{(\rho_t^{E})}$, is clearly diagonal in the basis of $\Xi$.  Using the extended version of the adjoint quantum filter in \eqref{eq:RhoFilter}, one can derive the \emph{ensemble quantum filtering equations}
\begin{subequations} \label{eq:finitedim:ensemblefilter}
	\begin{align}
			d\rho_t^{(i)} &= -i[\xi_i H_0, \rho_t^{(i)}]dt   
						   + (L\rho_t^{(i)}L^{\dag} - \frac{1}{2}L^{\dag}L\rho_t^{(i)} - \frac{1}{2}\rho_t^{(i)}L^{\dag}L)dt
							\nonumber \\
					      & + \left(L\rho_t^{(i)} + \rho_t^{(i)}L^{\dag}
					 		- \Tr{(L+L^{\dag})\rho_t^{(i)}}\rho_t^{(i)}\right)dW_t \label{eq:finitedim:ensemblefilter:rho}\\
			dp_t^{(i)} &= \left(\Tr{(L+L^{\dag})\rho_t^{(i)}} - \Tr{I\otimes(L+ L^{\dag})\rho_t^{E}}\right)p_t^{(i)}dW_t
							\label{eq:finitedim:ensemblefilter:prob}\\
			dW_t &= dM_t - \Tr{I\otimes ( L +  L^{\dag})\rho_t^{E}}dt \label{eq:finitedim:ensemblefilter:innov}
	\end{align}
\end{subequations}
We see that each $\rho^{(i)}_t$ in the ensemble evolves under a quantum filter with $H = \xi_i H_0$ and is coupled to other ensemble members through the innovation factor $dW_t$, which depends on the ensemble expectation of the measurement observable.  Note that one can incorporate any prior knowledge of $\xi$ in the weights of the initial distribution $\{p_0^{(i)}\}$.

The reader should not be surprised that a similar approach would work for estimating more than one parameter at a time, such as three cartesian components of an applied magnetic field.  One would introduce an auxiliary space for each parameter and extend the operators in the obvious way.  The ensemble filter would then be for a joint distribution over the multi-dimensional parameter space.  Similarly, one could use this formalism to distinguish initial states, rather than parameters which couple via the Hamiltonian.  For example, in the case of state discrimination, one would introduce an auxiliary space which labels the possible input states, but does not play any role in the dynamics.  The filtered weights would then be the probabilities to have been given a particular initial state. In fact, using a slightly different derivation, Jacobs derived equations similar to \eqref{eq:finitedim:ensemblefilter} for the case of binary state discrimination \cite{Jacobs:2006a}.  Yanagisawa recently studied the general problem of retrodiction or ``smoothing'' of quantum states \cite{Yanagisawa:2007a}.  In light of his work and results in the following section, the retrodictive capabilities of quantum filtering are very limited without significant prior knowledge or feedback.   
%%%%%%%%%%%%%%%
\subsection{Conditions for convergence}
Although introducing the auxiliary parameter space does not change the derivation of the quantum filter, it is not clear how the initial uncertainty in the parameter will impact the filter's ability to ultimately track to the correct value.  Indeed, outside of anecdotal numerical evidence (which we will presently add to), there has been little formal consideration of the sensitivity of the quantum filter to the initial state estimate.  Recently, van Handel presented a set of conditions which determine whether the quantum filter will asymptotically track to the correct state independently of the assumed initial state \cite{vanHandel:2008a}.  Since we have embedded parameter estimation in the state estimation framework, such stability then determines whether the quantum filter can asymptotically track to the true parameter, i.e. whether $\lim_{t \rightarrow \infty} p_{t}^{(j)} =  \delta_{ij}$ when $\xi = \xi_i$.  In this section, we present van Handel's results in the context of our parameter estimation formalism and present a simple check of asymptotic convergence of the parameter estimate.  We begin by reviewing the notions of absolute continuity and observability.

In the general stability problem, let $\rho_1$ be the true underlying state and $\rho_2$ be the initial filter estimate.  We say that $\rho_1$ is \emph{absolutely continuous} with respect to $\rho_2$, written $\rho_1 \ll \rho_2$, if and only if $\operatorname{ker} \rho_1 \supset \operatorname{ker} \rho_2$.  In the context of parameter estimation, we assume that we know the initial atomic state exactly, so that $\rho_1 \ll \rho_2$ as long as the reduced states satisfy $\rho_1^{E} \ll \rho_2^{E}$.  Since these reduced states are simply discrete probability distributions, $\{(p_t^{i})_1\}$ and $\{(p_t^{i})_2\}$, this is just the standard definition of absolute continuity in classical probability theory.  In our case, the true state has ${(p_{t=0}^{(j)})}_{1} = \delta_{ij}$ if the parameter has value $\xi_i$.  Thus, as long as our estimate has non-zero weight on the $i$-th component, $\rho_1 \ll \rho_2$.  This is trivially satisfied if ${(p_{t=0}^{(j)})}_{2} \neq 0$ for all $j$.

The other condition for asymptotic convergence is that of observability.  A system is \emph{observable} if one can determine the exact initial atomic state given the entire measurement record over the infinite time interval.  Observability is then akin to the ability to distinguish any pair of initial states on the basis of the measurement statistics alone.  Recall the definition of the Lindblad generator in \eqref{eq:lindblad_generator} and further define the operator $\mathcal{K}[X_A] = L^{\dag}X_a + X_aL$.  Then according to Proposition 5.7 in \cite{vanHandel:2008a}, the observable space $\mathcal{O}$ is defined as the smallest linear subspace of $\mathcal{A}$ containing the identity and which is invariant under the action of $\mathcal{L}$ and $\mathcal{K}$.  The filter is observable if and only if $\mathcal{A} = \mathcal{O}$, or equivalently $\dim{\mathcal{A}} = \dim{\mathcal{O}}$.  

In the finite-dimensional case, van Handel presents an iterative procedure for constructing the observable space.  Define the linear spaces $\mathcal{Z}_n \subset \mathcal{A}$ as
\begin{equation}
	\begin{split}
		\mathcal{Z}_0 &= \operatorname{span}\{I\}\\
		 \mathcal{Z}_n &= \operatorname{span}\{\mathcal{Z}_{n-1},
								\mathcal{L}[\mathcal{Z}_{n-1}], \mathcal{K}[\mathcal{Z}_{n-1}]\},\quad n > 0
	\end{split}
\end{equation}
The procedure terminates when $\mathcal{Z}_n = \mathcal{Z}_{n+1}$, which is guaranteed for some finite $n = m$, as the dimension of $\mathcal{Z}_n$ cannot exceed the dimension of the ambient space $\mathcal{A}$.  Moreover, the terminal $\mathcal{Z}_m = \mathcal{O}$, so that using a Gram-Schmidt procedure, one can iteratively find a basis for $\mathcal{O}$ and easily compute its dimension.  Note that for operators $A$ and $B$, the inner-product $\langle A, B \rangle$ is the Hilbert-Schmidt inner product $\Tr{A^{\dag} B}$.

Given these definitions, one has the following theorem for filter convergence and corollary for parameter estimation.

\begin{thm} (Theorem 2.5 in \cite{vanHandel:2008a}) Let $\pi_t^{\rho_i}(X_A)$ be the evolved filter estimate, initialized under state $\rho_i$.  If the system is observable and $\rho_1 \ll \rho_2$, the quantum filter is asymptotically stable in the sense that
\begin{equation}
	\abs{\pi_t^{\rho_1}(X_A) - \pi_t^{\rho_2}(X_A)}_{M^{\rho_1}_{[0,t]}} \stackrel{t \rightarrow\infty}
						{\longrightarrow} 0 \quad \forall X_a \in \mathcal{A}
\end{equation} 
where the convergence is under the observations generated by $\rho_1$.
\end{thm}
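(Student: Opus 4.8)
The statement is quoted directly from \cite{vanHandel:2008a}, so the plan is to reconstruct the strategy of that proof rather than to supply an independent argument. The approach I would take is the \emph{reference-probability} (Girsanov) method. First I would pass to a reference measure $\mathbbm{P}_0$ under which the observation process $M_t$ becomes a standard Wiener process independent of the atomic system. Under $\mathbbm{P}_0$ the \emph{unnormalized} filter (the Zakai equation) $\sigma_t^{\rho}(X_A)$ is \emph{linear} in the initial state $\rho$ and is driven only by the observations; the physical filter is recovered by the normalization $\pi_t^{\rho}(X_A) = \sigma_t^{\rho}(X_A)/\sigma_t^{\rho}(I)$. This linearity is the key structural fact, since it lets me compare the two filters by comparing two linear flows that differ only in their initial data.

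Second, I would use $\rho_1 \ll \rho_2$ to control the normalizations. Because $\operatorname{ker}\rho_1 \supset \operatorname{ker}\rho_2$, the mass the true state places on any direction is also present in the estimate, so $\sigma_t^{\rho_2}(I)$ never becomes negligible relative to $\sigma_t^{\rho_1}(I)$ along $\mathbbm{P}_{\rho_1}$-typical trajectories. I would then introduce a nonnegative Lyapunov functional measuring the discrepancy between the two conditional states — the natural choice being a relative-entropy-type quantity $V_t = S\big((\rho_{\xi})_t^{1} \,\|\, (\rho_{\xi})_t^{2}\big)$ reduced to the classical distributions $\{(p_t^{(i)})_1\},\{(p_t^{(i)})_2\}$ — and show, via the It\^{o} rules applied to \eqref{eq:RhoFilter}, that $V_t$ is a supermartingale under the observations generated by $\rho_1$. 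The martingale convergence theorem then gives $V_t \to V_{\infty}$ almost surely, and the drift (compensator) of $V_t$ furnishes an integrated ``information'' term whose finiteness I would exploit in the last step.

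Third, I would invoke \emph{observability} to force $V_{\infty} = 0$. The content of $\mathcal{A} = \mathcal{O}$, with $\mathcal{O}$ built as the $\mathcal{L},\mathcal{K}$-invariant span $\mathcal{Z}_m$ starting from $I$, is that the statistics of the observation increments asymptotically determine the expectation of \emph{every} $X_A \in \mathcal{A}$. Concretely, the integrated information term controls, on the generators of $\mathcal{Z}_n$, the discrepancies $\pi_t^{\rho_1}(Y)-\pi_t^{\rho_2}(Y)$ for $Y$ obtained by repeated application of $\mathcal{L}$ and $\mathcal{K}$ to $I$; propagating this through the iterative chain $\mathcal{Z}_0 \subset \mathcal{Z}_1 \subset \cdots \subset \mathcal{Z}_m = \mathcal{O}$ shows that asymptotic agreement on $I$ forces asymptotic agreement on a spanning set of $\mathcal{A}$. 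Combined with absolute continuity, so that the two limits are comparable, this yields $\abs{\pi_t^{\rho_1}(X_A)-\pi_t^{\rho_2}(X_A)} \to 0$ for all $X_A$.

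The main obstacle is exactly this last step: converting the soft statement ``$V_t$ converges'' into the hard statement ``$V_{\infty} = 0$.'' This is where the observability hypothesis does all the work, and it requires a genuine ergodic / backward-martingale argument rather than a calculation — one must show that the tail of the observation filtration, together with the $\mathcal{L},\mathcal{K}$-invariance defining $\mathcal{O}$, leaves no residual ambiguity in the initial state. The remaining technical points, namely propagation of $\rho_1 \ll \rho_2$ under the filter and integrability of the entropy functional, are comparatively routine here because $\mathcal{A}$ is finite-dimensional, so I would treat them briefly and concentrate the effort on the observability-to-stability implication.
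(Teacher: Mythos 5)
First, note that the paper does not prove this statement: it is quoted verbatim as Theorem~2.5 of van Handel \cite{vanHandel:2008a} and used as a black box --- the only proof the authors themselves supply is for the Corollary that follows, relating observability of the extended parameter filter to that of the known-parameter filter. So there is no in-paper proof to compare against; your proposal can only be judged as a reconstruction of the cited argument.

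As such a reconstruction it has the right outline in places --- the reference-probability representation of the misspecified filter, the role of $\rho_1 \ll \rho_2$ in keeping the normalization $\sigma_t^{\rho_2}(I)$ nondegenerate along $\rho_1$-typical trajectories, and the recognition that observability is what must force the asymptotic discrepancy to vanish --- but it does not constitute a proof, and its middle step is grafted from a different branch of the filter-stability literature. The relative-entropy supermartingale $V_t$ is the classical Clark--Ocone Lyapunov route; van Handel's quantum argument does not proceed through such a functional. It instead works directly with conditional expectations onto the commutative observation algebra: roughly, one represents the misspecified filter as a Bayes-type ratio involving a density (likelihood) element determined by the pair $(\rho_1,\rho_2)$, applies the martingale convergence theorem to $\mathbbm{E}[\,\cdot\,|M_{[0,t]}]$ as $t\to\infty$, and then uses observability to show that the limiting observation algebra $M_{[0,\infty)}$ determines every quantity on which the two filters could disagree, so the limit of the difference is zero. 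More importantly, your own text concedes that the decisive step --- converting ``$V_t$ converges'' into ``$V_\infty=0$'' --- ``requires a genuine ergodic / backward-martingale argument'' that you do not supply. That implication from observability to stability is the entire content of the theorem; deferring it leaves the proposal as a plausible plan rather than a proof.
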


One could use this theorem to directly check the stability of the quantum filter for parameter estimation, using the extended forms of operators in $\mathcal{L}$ and $\mathcal{K}$ and being careful that the observability condition is now $\dim{\mathcal{O}} = \dim{\mathcal{D}{(\mathcal{H}_\xi)}\otimes\mathcal{A}}$.  However, the following corollary relates the observability of the parameter filter to the observability of the related filter for a known parameter.  Combined with the discussion of extending the absolute continuity condition, this then gives a simple check for the stability of the parameter filter.  

\begin{cor}
Consider a parameter $\xi$ which takes on one of $N$ distinct positive real values $\{\xi_i\}$.  If the quantum filter with known parameter is observable, then the corresponding extended filter for estimation of $\xi$ is observable.
\end{cor}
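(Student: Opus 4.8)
The plan is to work in the eigenbasis of $\Xi$ and show that the extended observable space is block diagonal, with the $i$-th block governed by the known-parameter generators at value $\xi_i$. First I would compute the action of the extended generators on a product operator $\ketbra{\xi_i}{\xi_i}\otimes X$. Since the measurement operator becomes $I\otimes L$ and commutes with everything on $\mathcal{H}_{\xi}$, one finds $\mathcal{K}[\ketbra{\xi_i}{\xi_i}\otimes X]=\ketbra{\xi_i}{\xi_i}\otimes\mathcal{K}[X]$, and using $\Xi\ketbra{\xi_i}{\xi_i}=\xi_i\ketbra{\xi_i}{\xi_i}$ in the Hamiltonian commutator gives $\mathcal{L}[\ketbra{\xi_i}{\xi_i}\otimes X]=\ketbra{\xi_i}{\xi_i}\otimes\mathcal{L}_{\xi_i}[X]$, where $\mathcal{L}_{\xi_i}$ is precisely the known-parameter Lindblad generator at $\xi=\xi_i$. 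Thus each eigenblock propagates under the known-parameter dynamics for its own value of the parameter.

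This immediately yields the easy inclusion. The subspace $\bigoplus_i \ketbra{\xi_i}{\xi_i}\otimes\mathcal{O}_{\xi_i}$, where $\mathcal{O}_{\xi_i}\subseteq\mathcal{A}$ is the known observable space at $\xi_i$, contains $I\otimes I$ and is invariant under both $\mathcal{L}$ and $\mathcal{K}$, so $\mathcal{O}_{\mathrm{ext}}\subseteq\bigoplus_i\ketbra{\xi_i}{\xi_i}\otimes\mathcal{O}_{\xi_i}$. Under the hypothesis each $\mathcal{O}_{\xi_i}=\mathcal{A}$, which bounds $\dim\mathcal{O}_{\mathrm{ext}}$ by $N\dim\mathcal{A}=\dim(\mathfrak{D}(\mathcal{H}_{\xi})\otimes\mathcal{A})$. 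Observability, however, requires \emph{equality}, i.e.\ the reverse inclusion, so this step alone does not finish the proof.

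For the reverse inclusion I would reduce the problem to producing the block projectors $\ketbra{\xi_i}{\xi_i}\otimes I$ inside $\mathcal{O}_{\mathrm{ext}}$: once such a projector is available, repeatedly applying the (block-preserving) generators $\mathcal{L}_{\xi_i}$ and $\mathcal{K}$ fills the entire $i$-th block with $\mathcal{O}_{\xi_i}=\mathcal{A}$, and summing over $i$ recovers all of $\mathfrak{D}(\mathcal{H}_{\xi})\otimes\mathcal{A}$. Because the $\xi_i$ are distinct (and positive, hence nonzero), each projector is a polynomial in $\Xi$ by Lagrange interpolation, so it would suffice to show that $\Xi^k\otimes I\in\mathcal{O}_{\mathrm{ext}}$ for $k=0,\dots,N-1$; distinctness is exactly what makes the associated Vandermonde system invertible and lets one resolve the $N$ blocks.

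The hard part will be this decoupling step. Applying $\mathcal{L}$ raises the power of $\Xi$ only through the Hamiltonian piece $i\,\Xi\otimes[H_0,\cdot]$, which is always accompanied by the $\Xi$-preserving dissipative part; the generators never allow one to \emph{divide} by $\Xi$, so pure powers $\Xi^k\otimes I$ cannot simply be read off. One must instead argue that the interplay of $\mathcal{L}$ and $\mathcal{K}$ generates enough linearly independent mixed-degree elements that, after evaluation at the $N$ distinct nodes $\xi_i$, the blocks separate. This is where the known-observability hypothesis must be used quantitatively rather than block-by-block---indeed it is genuinely needed, since for generic linear generators the inclusion above can be strict---and promoting the per-block surjectivity to joint surjectivity across all blocks via the distinct-eigenvalue structure is the crux of the argument.
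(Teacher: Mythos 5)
Your setup is sound and is essentially dual to the paper's: you diagonalize in the eigenbasis of $\Xi$, observe that each block $\ketbra{\xi_i}{\xi_i}\otimes X$ evolves under the known-parameter generators evaluated at $\xi_i$, and obtain the upper bound $\mathcal{O}_{\mathrm{ext}}\subseteq\bigoplus_i\ketbra{\xi_i}{\xi_i}\otimes\mathcal{A}$. You also correctly identify that the content of the corollary is the reverse inclusion, and that by Lagrange interpolation it reduces to accumulating $N$ linearly independent elements of $\mathfrak{D}(\mathcal{H}_\xi)$, i.e.\ to inverting a Vandermonde system at the $N$ distinct positive nodes. But at exactly that point you stop: your final paragraph says one ``must instead argue that the interplay of $\mathcal{L}$ and $\mathcal{K}$ generates enough linearly independent mixed-degree elements'' and labels this the crux, without supplying the argument. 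That step \emph{is} the corollary; as written the proposal is a correct reduction plus an unproven claim, not a proof.

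The paper closes this gap by decomposing $\mathcal{O}_{\mathrm{ext}}$ with respect to an orthogonal basis $\{A_i\}$ of the atomic observable space (which equals $\mathcal{A}$ by hypothesis) rather than with respect to the eigenprojectors of $\Xi$. Every element produced by the iterative construction is a combination of terms $\Xi^{k}\otimes A_i$; the only generator acting nontrivially on the auxiliary factor is the Hamiltonian commutator with $\Xi\otimes H_0$, which multiplies by one further power of $\Xi$; and since any $N$ distinct powers of $\Xi$ are linearly independent (the generalized Vandermonde matrix at distinct positive nodes is nonsingular), the iteration cannot reach an invariant span until the set of powers attached to each $A_i$ has the maximal size $N$. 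Hence each subspace $\mathcal{O}_i=\operatorname{span}\{\Xi^{k_i^j}\otimes A_i\}$ has dimension $N$ and $\dim\mathcal{O}_{\mathrm{ext}}=Nr$ as required. This also shows that your worry about being unable to ``divide by $\Xi$'' to isolate pure powers $\Xi^k\otimes I$ is a red herring: one never needs the projectors tensored with the identity specifically, only that, for each fixed atomic basis element $A_i$, the powers of $\Xi$ attached to it eventually span all of $\mathfrak{D}(\mathcal{H}_\xi)$ --- at which point your own Vandermonde/interpolation observation delivers every $\ketbra{\xi_j}{\xi_j}\otimes A_i$ and hence the reverse inclusion.
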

\begin{proof}
In order to satisfy the observability condition, we require $\dim{\mathcal{O}} = Nr$, where we have set $\dim{\mathcal{A}} = r$ and used the fact that $\dim{\mathfrak{D}(\mathcal{H}_\xi)} = N$.  Given that the filter for a known parameter is observable, its observable space coincides with $\mathcal{A}$ and has an orthogonal operator basis $\{A_i\}$, where we take $A_0 = I$.

Similarly, consider the $N$-dimensional operator space $\mathfrak{D}(\mathcal{H}_\xi)$.  If $\{\xi_i\}$ are distinct, any set of the form
\begin{equation}
	\{\Xi^{k_1}, \Xi^{k_2}, \ldots, \Xi^{k_N}\}, k_i \in \mathbbm{N}, k_i \neq k_j \text{ if } i \neq j
\end{equation}
is linearly independent, since the corresponding generalized Vandermonde matrix
\begin{equation}
	V_\xi = \begin{pmatrix}
		\xi_1^{k_1} & \xi_1^{k_2} & \ldots &  \xi_1^{k_N} \\
		\vdots & \vdots & \ddots & \vdots  \\
		\xi_N^{k_1} & \xi_N^{k_2} &\ldots & \xi_N^{k_N}
	\end{pmatrix}
\end{equation}
has linearly independent columns \cite{Gantmakher:2000a}.  

Following the iterative procedure, we construct the observable space for the parameter estimation filter starting with $I \otimes A_0$, which is the identity in the extended space. We then iteratively apply $\mathcal{L}$ and $\mathcal{K}$ until we have an invariant linear span of operators.  The only non-trivial operator on the auxiliary space comes from the Hamiltonian part of the Lindblad generator, which introduces higher and higher powers of the diagonal matrix $\Xi$.  Since $\dim{\mathfrak{D}{(\mathcal{H}_\xi)}\otimes\mathcal{A}}$ is finite, this procedure must terminate. The resulting observable space can be decomposed into subspaces
\begin{equation}
\mathcal{O}_{i} = \{\Xi^{k_i^j} \otimes A_i\}, \quad i = 1,\ldots, r \quad k_i^j \in \mathbbm{N}  
\end{equation}
where $k_i^j$ is some increasing sequence of non-negative integers which correspond to the powers of $\Xi$ that are introduced via the Hamiltonian.  Note that the specific values of $k_i^j$ depend on the commutator algebra of $H_0$ and the atomic-space operator basis $\{A_i\}$.  Regardless, since the Hamiltonian in $\mathcal{L}$ can always add more powers of $\Xi$, the procedure will not terminate until $\mathcal{O}_i$ is composed of a largest linearly independent set of powers of $\Xi$.  This set has at most $N$ distinct powers of $\Xi$, since it cannot exceed the dimension of the auxiliary space.  Given that any collection of $N$ powers of $\Xi$ is linearly independent, this means once we reach a set of $N$ powers $k_i^j$, the procedure terminates and $\dim{\mathcal{O}_i} = N$.   Since $\mathcal{O}$ has $r$ subspaces $\mathcal{O}_i$, each of dimension $N$, $\dim{\mathcal{O}} =  Nr$ as desired and the observability condition is satisfied.
\end{proof}

Although these conditions provide a simple check, we would like to stress that they do not determine how quickly the convergence occurs, which will depend on the specifics of the problem at hand.  Additionally, as posed, the question of observability is a binary one.  One might expect that some unobservable systems are nonetheless ``more observable'' than others or simply that unobservable systems might still be useful for parameter estimation.  Given the corollary above, one can see that this may occur if a single parameter $\xi_j = 0$.  Then $V_{\xi}$ has a row of all zeros, so that the maximal dimension of a set of linearly independent powers of $\Xi$ is $N - 1$.  Similarly, if one allows both positive and negative real-valued parameters, the properties of $V_{\xi}$ are not as obvious, though in many circumstances, having both $\xi_i$ and $-\xi_i$ renders the system unobservable.  We explore these nuances in numerical simulations presented in the following section.  
%%%%%%%%%%%%%%%
\subsubsection*{Qubit Example}
%%%%%%%%%%%%%%%% FIGURE %%%%%%%%%%%%%%%%%%%%%%%%%%%%%%%%%%%
\begin{figure*}[bt]
	\centering
		\includegraphics[scale=1]{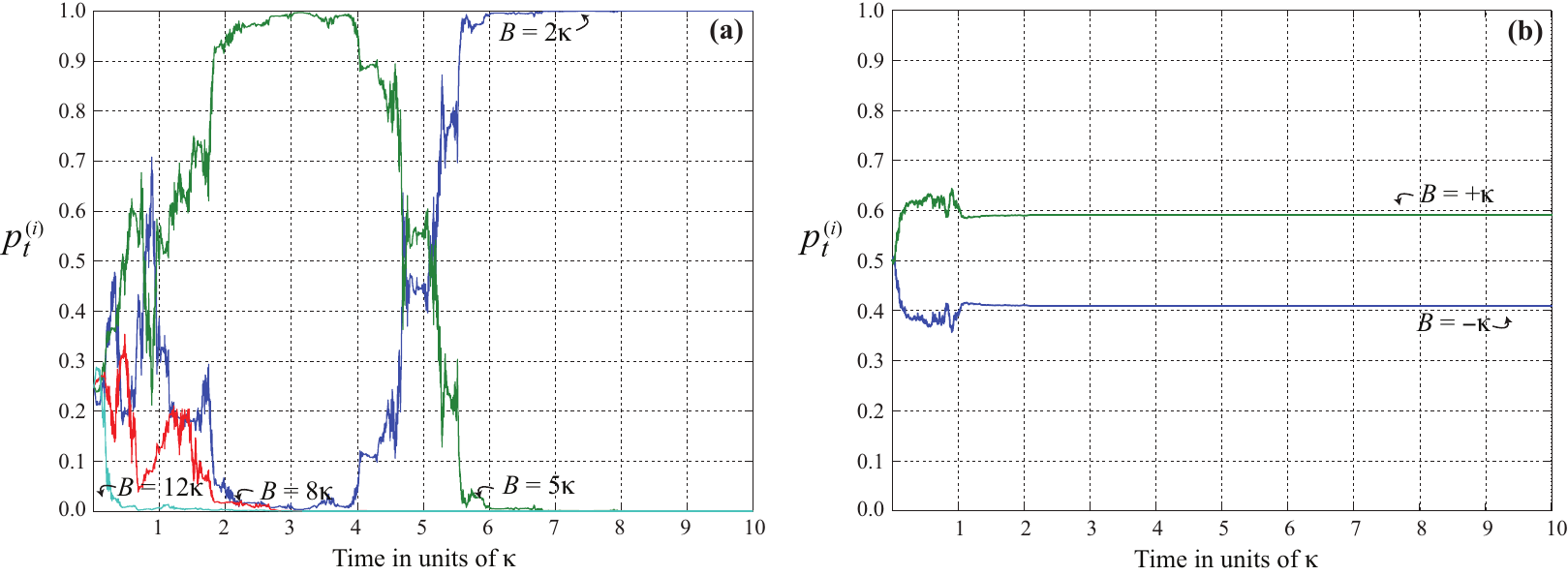}
	\caption{(a) Filtered $p_t^{(i)}$ for $B \in \{ 2\kappa, 5\kappa, 8\kappa, 12\kappa\}$.  The filter tracks to the true underlying value of $B = 2\kappa$ (b) Filtered $p_t^{(i)}$ for $B \in \{ -\kappa, +\kappa\}$.  The filter does not track to $B = +\kappa$ with probability one, though it is the most probable parameter value.}
	\label{fig:discrete:combinedruns}
\end{figure*}
%%%%%%%%%%%%%%%%%%%%%%%%%%%%%%%%%%%%%%%%%%%%%%%%%%%%%%%%%%%
Consider using the single qubit magnetometer of Section \ref{subsec:single_qubit} as a probe for the magnetic field $B$.  Since the initial state is restricted to the $x$-$z$ plane, the $y$ component of the Bloch vector is always zero and thus is not a relevant part of the atomic observable space, which is spanned by $\{I,\sigma_x,\sigma_z\}$.  In some sense, the filter with known $B$ is trivially observable, since we assume the initial state is known precisely.  

When $B$ is unknown, the ensemble parameter filter is given by
\begin{subequations} \label{eq:qubit_ensemble}
	\begin{align} 
			d\theta_t^{(i)} &= -2B_idt + \kappa \cos(\theta_t^{(i)})
									(\sin(\theta_t^{(i)}) - 2\expect{\sigma_z}^{E})dt \nonumber\\
							&	+ 2\sqrt{\kappa}\cos(\theta_t^{(i)})dW_t \\
			dp^{(i)}_t &= 2\sqrt{\kappa}(\sin(\theta_t^{(i)}) - \expect{\sigma_z}^{E}) p^{(i)}_t dW_t
	\end{align}
\end{subequations}
where $dW_t = dM_t -2 \sqrt{\kappa} \expect{\sigma_z}^{(E)}$ and $\expect{\sigma_z}^{E} = \sum_{i} p_t^{(i)}\sin(\theta_t^{(i)})$.  We simulated this filter by numerically integrating the quantum filter in \eqref{eq:known_qubit_filter} using a value for $B$ uniformly chosen from the given ensemble of potential $B$ values.  This generates a measurement current $dM_t$, which is then fed into the ensemble filter of \eqref{eq:qubit_ensemble}.  For all simulations, we set $\kappa = 1$ and used a simple It\^{o}-Euler integrator with a step-size $dt = 10^{-5}$ \cite{Kloeden:1992a}.

Figure \ref{fig:discrete:combinedruns}(a) shows a simulation of a filter for the case $B \in \{ 2\kappa, 5\kappa, 8\kappa, 12\kappa\}$.  The filter was initialized with a uniform distribution, $p^{(i)}_0 = 1/4$.  For the particular trajectory shown, the true value of $B$ was $2\kappa$ and we see that the filter successfully tracks to the correct $B$ value.  This is not surprising, given that the potential values of $B$ are positive and distinct, thus satisfying the convergence corollary.  It is also interesting to note that the filter quickly discounts the probabilities for $8\kappa,12\kappa$, which are far from the true value.  Conversely, the filter initially favors the incorrect $B = 5\kappa$ value before honing in on the correct parameter value.

In Figure \ref{fig:discrete:combinedruns}(b), we show a simulation for the case of $B \in \{+\kappa, -\kappa\}$, which does not satisfy the convergence corollary.  In fact, using the iterative procedure, one finds the observable space is spanned by $\{I\otimes I, I \otimes \sigma_z, B \otimes \sigma_x, B^2 \otimes I, B^2 \otimes \sigma_z, B^3 \otimes \sigma_x\}$.  But since $B = \left(\begin{smallmatrix} \kappa & 0\\ 0 & -\kappa \end{smallmatrix}\right)$, $B^2 = \kappa^2 I$ so that only 3 of the 6 operators are linearly independent.  Although the filter does not converge to the true underlying value of $B = +\kappa$, it does reach a steady-state that weights the true value of $B$ more heavily.  Simulating 100 different trajectories for the filter, we observed 81 trials for which the final probabilities were weighted more heavily towards the true value of $B$.  This confirms our intuition that the binary question of observability does not entirely characterize the performance of the parameter filter.  

Figure \ref{fig:discrete:convergence} shows the rate of convergence of filters meant to distinguish different sets of $B$.  The rate of convergence is defined as the ensemble average of the random variable
\begin{equation}
	I_{\alpha} = \begin{cases}
		1, &\text{ if $p_t^{(i)} > \alpha $ for any $i$}\\
		0, &\text{ otherwise }
	\end{cases} .
\end{equation}
Although any individual run might fluctuate before converging to the underlying $B$ value, the average of $I_\alpha$ over many runs should give some sense of the rate at which these fluctuations die down.  For the simulation shown, we set $\alpha = 0.95$ and averaged $I_{0.95}$ over 1000 runs for two different cases---either all possible $B$ values are greater than $\kappa$ or all are less than $\kappa$.  As shown in the plot, the former case shows faster convergence since the $B$ field drives the dynamics more strongly than the measurement process, which in turn makes the trajectories of different ensemble members more distinct.  Of course, one cannot make the measurement strength too weak since we need to learn about the system evolution.
%%%%%%%%% FIGURE %%%%%%%%%%%%%%%%%%%%%%%%%%%%%%%%%%%%%%%%%%%%%
\begin{figure}[hb]
	\centering
		\includegraphics[scale=1]{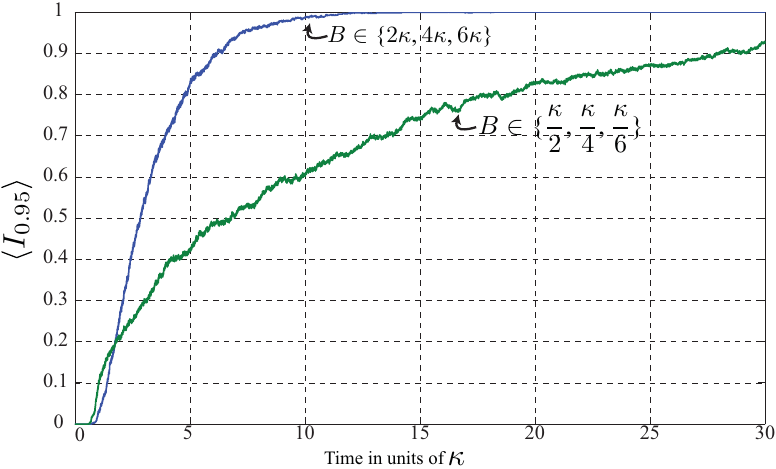}
	\caption{Rate of convergence ($I_{0.95}$), averaged over 1000 trajectories.  The filters are for cases when possible $B$ values are either all larger or all smaller than the measurement strength $\kappa$.  }
	\label{fig:discrete:convergence}
\end{figure}
%%%%%%%%%%%%%%%%%%%%%%%%%%%%%%%%%%%%%%%%%%%%%%%%%%%%%%%%%
\section{Quantum Particle Filter}
\label{sec:infinite}
Abstractly, developing a parameter estimator in the continuous case is not very different than in the finite dimensional case.  One can still introduce an auxiliary space $\mathcal{H}_{\xi}$, which is now infinite dimensional.  In this space, we embed the operator version of $\xi$ as 
\begin{equation}
	\mathfrak{D}(\mathcal{H}_{\xi}) \ni \Xi = \int d\xi \xi \ketbra{\xi}{\xi} ,
\end{equation}
where $\Xi\ket{\xi} = \xi\ket{\xi}$ and $\braket{\xi}{\xi'} = \delta(\xi - \xi')$.  Again, by extending operators appropriately, the filters in \eqref{eq:PiFilter} and \eqref{eq:RhoFilter} become optimal parameter estimation filters.  We generalize the conditional ensemble state of \eqref{eq:finitedim:ensembleform} to
\begin{equation} \label{eq:infinite:rho}
		\rho_t^{E} = \int d\xi p_t(\xi) \ketbra{\xi}{\xi} \otimes \rho^{(\xi)}_t ,
\end{equation}
where $p_t(\xi) \equiv  P(\xi | M_{[0,t]})$ is the continuous conditional probability density.  Although the quantum filter provides an exact formula for the evolution of this density, calculating it is impractical, as one cannot exactly represent the continuous distribution on a computer.  The obvious approximation is to discretize the space of parameter values and then use the ensemble filter determined by \eqref{eq:finitedim:ensemblefilter}; indeed such an approach is very common in classical filtering theory and encompasses a broad set of Monte Carlo methods called particle filters \cite{Doucet:2001,Arulampalam:2002a}.

The inspiration for particle filtering comes from noting that any distribution can be approximated by a weighted set of point masses or \emph{particles}.  In the quantum case, we introduce a \emph{quantum particle} approximation of the conditional density in \eqref{eq:infinite:rho} as
\begin{equation}
	p_t(\xi) \approx \sum_{i = 1}^{N} p^{(i)}_t \delta(\xi - \xi_i) .
\end{equation}
The approximation can be made arbitrarily accurate in the limit of $N \rightarrow \infty$.  Plugging this into \eqref{eq:infinite:rho}, we recover precisely the form for the discrete conditional state given in \eqref{eq:finitedim:ensembleform}.  Accordingly, the quantum particle filtering equations are identical to those of the ensemble filter given in \eqref{eq:finitedim:ensemblefilter}.  The only distinction here is in the initial approximation of the space of parameter values.  Thus the basic quantum particle filter simply involves discretizing the parameter space,  then integrating the filter according to the ensemble filtering equations.  

The basic particle filter suffers from a degeneracy problem, in that all but a few particles may end up with negligible weights $p^{(i)}_t$.  This problem is even more relevant when performing parameter estimation, since the set of possible values for $\xi$ are fixed at the outset by the choice of discretization.  Even if a region in parameter space has low weights, its particles take up computational resources, but contribute little to the estimate of $\xi$.  More importantly, the ultimate precision of the parameter estimate is inherently limited by the initial discretization; we can never have a particle whose parameter value $\xi_i$ is any closer to the true value $\xi$ than the closest initial discretized value.

In order to circumvent these issues, we adopt the kernel resampling techniques of Liu and West \cite{Liu:2001a}.  The idea is to replace low weight particles with new ones concentrated in high weight regions of parameter space.  One first samples a source particle from the discrete distribution given by the weights $\{p^{(i)}_t\}$, ensuring new particles come from more probable regions of parameter space.  Given a source particle, we then create a child particle by sampling from a Gaussian kernel centered near the source particle.  By repeating this procedure $N$ times, we create a new set of particles which populate more probable regions of parameter space.  Over time, this adaptive procedure allows the filter to move away from unimportant regions of parameter space and more finely explore the most probable parameter values.

The details of the adaptive filter lie in parameterizing and sampling from the Gaussian kernel.  Essentially, we are given a source particle, characterized by $\ketbra{\xi_i}{\xi_i}$ and $\rho_t^{(i)}$, and using the kernel, create a child particle, characterized by $\ketbra{\tilde{\xi}_i}{\tilde{\xi}_i}$ and $\tilde{\rho}_t^{(i)}$.  One could attempt to sample from a multi-dimensional Gaussian over both the parameter and atomic state components, but ensuring that the sampled $\tilde{\rho}_t^{(i)}$ is a valid atomic state would be non-trivial in general.  There will be some cases, including the qubit example in the following section, where the atomic state is conveniently parameterized for Gaussian resampling.  But for clarity in presenting the general filter, we will create a child particle with the same atomic state as the parent particle.  

Under this assumption, the Gaussian kernel for parent particle $i$ is characterized by a mean $\mu^{(i)}$ and variance ${\sigma^2}^{(i)}$, both defined over the one dimensional parameter space.  Rather than setting the mean of this kernel to the parameter value of the parent, Liu and West suggest setting
\begin{equation} \label{eq:kernel:mean}
	\mu^{(i)} = a \xi_i + (1 - a) \bar{\xi} , \quad a \in [0,1]
\end{equation}
where $\bar{\xi} = \sum_i p_t^{(i)}\xi_i$ is the ensemble mean.  The parameter $a$ is generally taken to be close to one and serves as a mean reverting factor.  This is important because simply resampling from Gaussians centered at $\xi_i$ results in an overly dispersed ensemble relative to the parent ensemble.  The kernel variance is set to
\begin{equation} \label{eq:kernel:var}
	{\sigma^2}^{(i)} = h^2 V_t , \quad h \in [0,1]
\end{equation}
where $V_t = \sum_i p_t^{(i)}( \xi_i - \bar{\xi})^2$ is the ensemble variance and $h$ is the smoothing parameter.  It is generally a small number chosen to scale with $N$, so as to control how much kernel sampling explores parameter space.  While $a$ and $h$ can be chosen independently,  Liu and West relate them by $h^2 = 1 - a^2$, so that the new sample does not have an increased variance.
  
Of course, it would be computationally inefficient to perform this resampling strategy at every timestep, especially since there will be many steps where most particles have non-negligible contributions to the parameter estimate.  Instead, we should only resample if some undesired level of degeneracy is reached.  As discussed by Arulampalam et al. \cite{Arulampalam:2002a}, one measure of degeneracy is the effective sample size
\begin{equation}
	N_\text{eff} = \frac{1}{\sum_{i=1}^N (p_t^{(i)})^2} .
\end{equation}
At each timestep, we then resample if the ratio $N_\text{eff}/N$ is below some given threshold.  We are not aware of an optimal threshold to chose in general, but the literature suggest $2/3$ as a rule of thumb \cite{Doucet:2001}.

Altogether, the \emph{resampling quantum particle filter} algorithm proceeds as follows:
\begin{description}
	\item[Initialization] for $i = 1,\ldots, N$:
	 \begin{enumerate}
	 	\item Sample $\xi_i$ from the prior parameter distribution.
		\item Create a quantum particle with weight $p_t^{(i)} = 1/N$, parameter state $\ketbra{\xi_i}{\xi_i}$ and atomic state $\rho_0^{(i)} = \rho_0$, where $\rho_0$ is the known initial atomic state.
	 \end{enumerate}
	\item[Repeat] for all time:
		\begin{enumerate}
			\item Update the particle ensemble by integrating a timestep of the filter given in \eqref{eq:finitedim:ensemblefilter}.
			 \item If $N_\text{eff}/N$ is less than the target threshold, create a new particle ensemble:
			  \begin{description}
			  	\item[Resample] for $i = 1,\ldots, N$:
				 \begin{enumerate}
				 	\item Sample an index $i$ from the discrete density $\{p_t^{(i)}\}$.
					\item Sample a new parameter value $\tilde{\xi}_i$ from the Gaussian kernel with mean $\mu^{(i)}$ and variance ${\sigma^2}^{(i)}$ given by \eqref{eq:kernel:mean} and \eqref{eq:kernel:var}. 
					\item Add a quantum particle to the new ensemble with weight $p_t^{(i)} = 1/N$, parameter state $\ketbra{\tilde{\xi}_i}{\tilde{\xi}_i}$ and atomic state $\rho_t^{(i)} = \rho_t^{(i)}$
				 \end{enumerate}
			  \end{description}
		\end{enumerate}
\end{description}

Unfortunately, checking asymptotic convergence of the filter is more involved in the continuous-valued case, as the observability and absolute continuity conditions require extra care in infinite dimensions.  However, given that the quantum particle filter actually works on a discretized space, in practice we can simply use the results we had for the finite-dimensional case.  As before, we note that one can generalize the quantum particle filter to multidimensional parameters by using a multi-dimensional Gaussian kernel.  One might also consider using alternate kernel forms, such as a regular grid which has increasingly finer resolution with each resampling stage.  We will not consider such extensions here.  
%%%%%%%%%%
\subsubsection*{Qubit Example}
We now consider a resampling quantum particle filter for the qubit magnetometer introduced earlier in the paper.  As hinted at in the previous section, since the qubit state is parameterized by the continuous variable $\theta_t$, we can easily resample both the magnetic field $B_i$ and state $\theta^{(i)}$ using a two-dimensional Gaussian kernel for $(\tilde{B}_i,\tilde{\theta}^{(i)})$, with mean vector and covariance matrix given by generalizations of \eqref{eq:kernel:mean} and \eqref{eq:kernel:var}.  Since different values of $B$ result in different state evolutions, resampling both the state and magnetic field values should result in child particles that are closer to the true evolved state. 

\begin{figure}[t]
	\centering
		\includegraphics[scale=0.5]{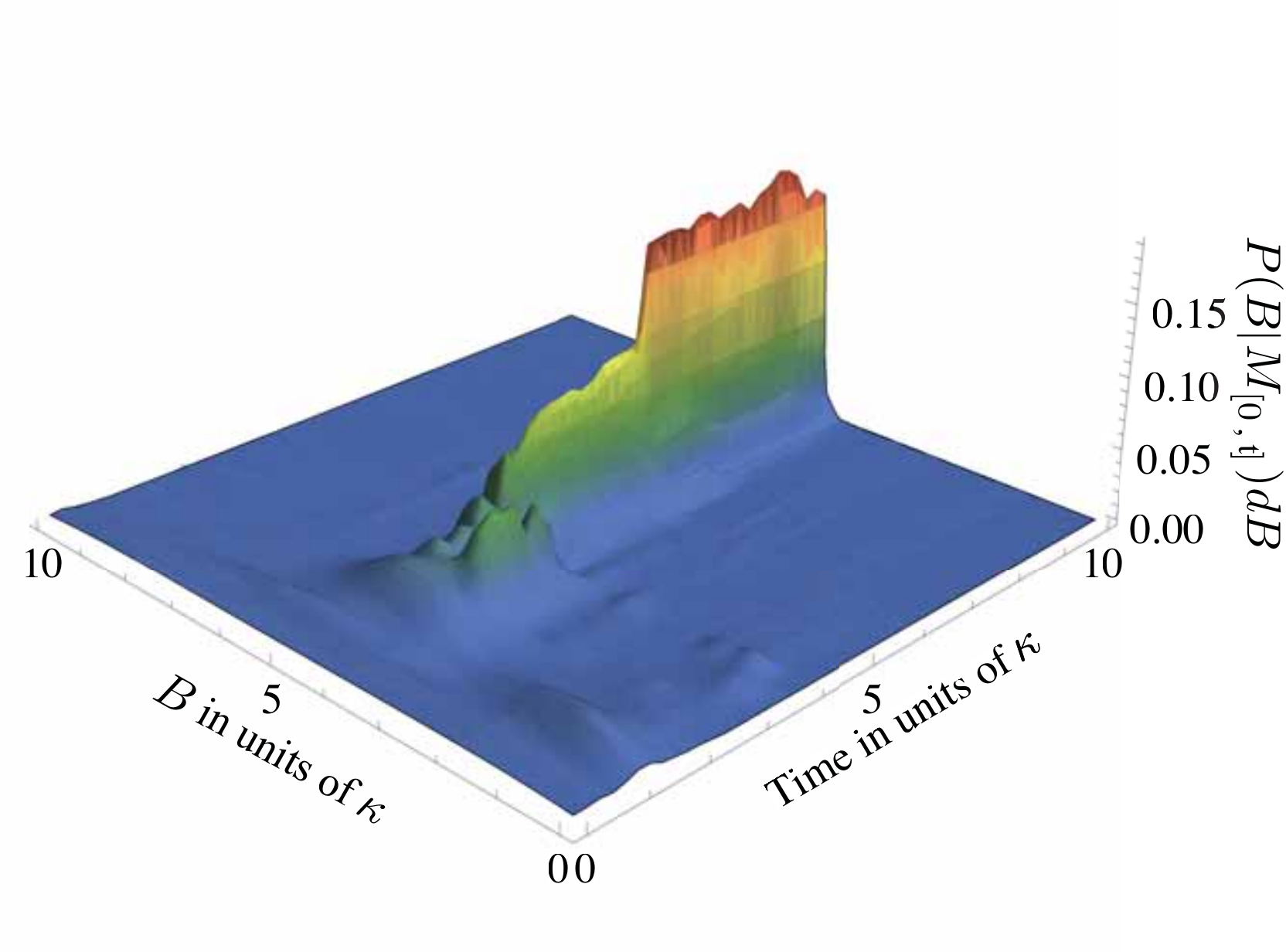}
	\caption{Kernel density reconstruction of $p_t(B)dB =  P(B | M_{[0,t]})dB$ for $N = 1000$ particle filter set with $dB = 10\kappa/150$, $a = 0.98$, $h = 10^{-3}$ and resampling threshold of $2/3$.  The true magnetic field was $B = 5\kappa$.}
	\label{fig:densityPlot}
\end{figure}

Figure \ref{fig:densityPlot} shows a typical run of the quantum particle filter for $N = 1000$ particles.  The true $B$ value was $5\kappa$ and the prior distribution over $B$ was taken to be uniform over the interval $[0,10\kappa]$.  As before, we used an It\^{o}-Euler integrator with a step-size of $dt = 10^{-5}\kappa$.  Note that both the timespan of integration and the potential values of $B$ range from $0$ to $10\kappa$ in our units.  The resampling parameters were $a = 0.98$, $h = 10^{-3}$ and resampling threshold $2/3$.  Note that we chose not to use Liu and West's relation between $a$ and $h$.   

In order to generate the figure, each particle's weight and parameter values were stored at 50 equally spaced times over the integration timespan.  Using Matlab's \texttt{ksdensity} function, these samples were then used to reconstruct $p_t(B)$ via a Gaussian kernel density estimate of the distribution.  The resulting kernel density estimate was then evaluated at 150 equally spaced $B$ values in the range $[0,10\kappa]$, which we plotted as $p_t(B)dB$ with $dB = 10\kappa/150$.  As is seen in the figure, after some initial multi-modal distributions over parameter space, the filter hones in on the true value of $B = 5\kappa$.  For the simulation shown, the final estimate was $\hat{B} = 5.03\kappa$ with uncertainty $\sigma_{\hat{B}} = 0.18\kappa$.  The filter resampled 7 times over the course of integration.    
%%%%%%%%%%%%
\section{Conclusion}
\label{sec:conclude}
We have presented practical methods for single-shot parameter estimation via continuous quantum measurement.  By embedding the parameter estimation problem in the standard quantum filtering problem, the optimal parameter filter is given by an extended form of the standard quantum filtering equation.  For parameters taking values in a finite set, we gave conditions for determining whether the parameter filter will asymptotically converge to the correct value.  For parameters taking values from an infinite set, we introduced the quantum particle filter as a computational tool for suboptimal estimation.  Throughout, we presented numerical simulations of our methods using a single qubit magnetometer.  

Our techniques should generalize straightforwardly for estimating time-dependent parameters and to a lesser extent, estimating initial state parameters.  The binary state discrimination problem studied by Jacobs \cite{Jacobs:2006a} is one such example and his approach is essentially a special case of our ensemble parameter filter.  We caution that the utility of initial state or parameter estimation depends heavily on the observability and absolute continuity of the problem at hand.  Future extensions of our work include exploring alternate resampling techniques for the quantum particle filter and developing feedback strategies for improving the parameter estimate.  More broadly, we believe there is much to be learned from classical control and parameter estimation theories. 

\begin{acknowledgements}
We thank Rob Cook for many valuable discussions. This work was supported by the NSF (PHY-0639994) and the AFOSR (FA9550-06-01-0178).
\end{acknowledgements}

%\bibliography{qubit_parameter_estimation}

\begin{thebibliography}{28}
\expandafter\ifx\csname natexlab\endcsname\relax\def\natexlab#1{#1}\fi
\expandafter\ifx\csname bibnamefont\endcsname\relax
  \def\bibnamefont#1{#1}\fi
\expandafter\ifx\csname bibfnamefont\endcsname\relax
  \def\bibfnamefont#1{#1}\fi
\expandafter\ifx\csname citenamefont\endcsname\relax
  \def\citenamefont#1{#1}\fi
\expandafter\ifx\csname url\endcsname\relax
  \def\url#1{\texttt{#1}}\fi
\expandafter\ifx\csname urlprefix\endcsname\relax\def\urlprefix{URL }\fi
\providecommand{\bibinfo}[2]{#2}
\providecommand{\eprint}[2][]{\url{#2}}

\bibitem[{\citenamefont{Braunstein and Caves}(1994)}]{Braunstein:1994a}
\bibinfo{author}{\bibfnamefont{S.~L.} \bibnamefont{Braunstein}}
  \bibnamefont{and} \bibinfo{author}{\bibfnamefont{C.~M.} \bibnamefont{Caves}},
  \bibinfo{journal}{Phys. Rev. Lett.} \textbf{\bibinfo{volume}{72}},
  \bibinfo{pages}{3439} (\bibinfo{year}{1994}).

\bibitem[{\citenamefont{Braunstein et~al.}(1996)\citenamefont{Braunstein,
  Caves, and Milburn}}]{Braunstein:1995a}
\bibinfo{author}{\bibfnamefont{S.~L.} \bibnamefont{Braunstein}},
  \bibinfo{author}{\bibfnamefont{C.~M.} \bibnamefont{Caves}}, \bibnamefont{and}
  \bibinfo{author}{\bibfnamefont{G.}~\bibnamefont{Milburn}},
  \bibinfo{journal}{Ann. Phys. (N.Y.)} \textbf{\bibinfo{volume}{247}},
  \bibinfo{pages}{135} (\bibinfo{year}{1996}).

\bibitem[{\citenamefont{Giovannetti et~al.}(2006)\citenamefont{Giovannetti,
  Lloyd, and Maccone}}]{Giovannetti:2006a}
\bibinfo{author}{\bibfnamefont{V.}~\bibnamefont{Giovannetti}},
  \bibinfo{author}{\bibfnamefont{S.}~\bibnamefont{Lloyd}}, \bibnamefont{and}
  \bibinfo{author}{\bibfnamefont{L.}~\bibnamefont{Maccone}},
  \bibinfo{journal}{Physical Review Letters} \textbf{\bibinfo{volume}{96}},
  \bibinfo{pages}{010401} (\bibinfo{year}{2006}).

\bibitem[{\citenamefont{Giovannetti et~al.}(2004)\citenamefont{Giovannetti,
  Lloyd, and Maccone}}]{Giovannetti:2004a}
\bibinfo{author}{\bibfnamefont{V.}~\bibnamefont{Giovannetti}},
  \bibinfo{author}{\bibfnamefont{S.}~\bibnamefont{Lloyd}}, \bibnamefont{and}
  \bibinfo{author}{\bibfnamefont{L.}~\bibnamefont{Maccone}},
  \bibinfo{journal}{Science} \textbf{\bibinfo{volume}{306}},
  \bibinfo{pages}{1330} (\bibinfo{year}{2004}).

\bibitem[{\citenamefont{Helstrom}(1976)}]{Helstrom:1976a}
\bibinfo{author}{\bibfnamefont{C.}~\bibnamefont{Helstrom}},
  \emph{\bibinfo{title}{Quantum Detection and Estimation Theory}}
  (\bibinfo{publisher}{Academic Press}, \bibinfo{address}{New York},
  \bibinfo{year}{1976}).

\bibitem[{\citenamefont{Holevo}(1982)}]{Holevo:1982a}
\bibinfo{author}{\bibfnamefont{A.}~\bibnamefont{Holevo}},
  \emph{\bibinfo{title}{Probabilistic and Statistical Aspects of Quantum
  Theory}} (\bibinfo{publisher}{North-Holland}, \bibinfo{address}{Amsterdam},
  \bibinfo{year}{1982}).

\bibitem[{\citenamefont{Boixo et~al.}(2007)\citenamefont{Boixo, Flammia, Caves,
  and Geremia}}]{Boixo:2007a}
\bibinfo{author}{\bibfnamefont{S.}~\bibnamefont{Boixo}},
  \bibinfo{author}{\bibfnamefont{S.~T.} \bibnamefont{Flammia}},
  \bibinfo{author}{\bibfnamefont{C.~M.} \bibnamefont{Caves}}, \bibnamefont{and}
  \bibinfo{author}{\bibfnamefont{J.}~\bibnamefont{Geremia}},
  \bibinfo{journal}{Physical Review Letters} \textbf{\bibinfo{volume}{98}},
  \bibinfo{pages}{090401} (\bibinfo{year}{2007}).

\bibitem[{\citenamefont{Cram\'{e}r}(1946)}]{Cramer:1946a}
\bibinfo{author}{\bibfnamefont{H.}~\bibnamefont{Cram\'{e}r}},
  \emph{\bibinfo{title}{Mathematical Methods of Statistics}}
  (\bibinfo{publisher}{Princeton Univ. Press}, \bibinfo{year}{1946}).

\bibitem[{\citenamefont{Nagata et~al.}(2007)\citenamefont{Nagata, Okamoto,
  O'Brien, Sasaki, and Takeuchi}}]{Nagata:2007a}
\bibinfo{author}{\bibfnamefont{T.}~\bibnamefont{Nagata}},
  \bibinfo{author}{\bibfnamefont{R.}~\bibnamefont{Okamoto}},
  \bibinfo{author}{\bibfnamefont{J.~L.} \bibnamefont{O'Brien}},
  \bibinfo{author}{\bibfnamefont{K.}~\bibnamefont{Sasaki}}, \bibnamefont{and}
  \bibinfo{author}{\bibfnamefont{S.}~\bibnamefont{Takeuchi}},
  \bibinfo{journal}{Science} \textbf{\bibinfo{volume}{316}},
  \bibinfo{pages}{726} (\bibinfo{year}{2007}).

\bibitem[{\citenamefont{Pezze et~al.}(2007)\citenamefont{Pezze, Smerzi, Khoury,
  Hodelin, and Bouwmeester}}]{Pezze:2007a}
\bibinfo{author}{\bibfnamefont{L.}~\bibnamefont{Pezze}},
  \bibinfo{author}{\bibfnamefont{A.}~\bibnamefont{Smerzi}},
  \bibinfo{author}{\bibfnamefont{G.}~\bibnamefont{Khoury}},
  \bibinfo{author}{\bibfnamefont{J.~F.} \bibnamefont{Hodelin}},
  \bibnamefont{and}
  \bibinfo{author}{\bibfnamefont{D.}~\bibnamefont{Bouwmeester}},
  \bibinfo{journal}{Physical Review Letters} \textbf{\bibinfo{volume}{99}},
  \bibinfo{pages}{223602} (\bibinfo{year}{2007}).

\bibitem[{\citenamefont{Woolley et~al.}(2008)\citenamefont{Woolley, Milburn,
  and Caves}}]{Woolley:2008a}
\bibinfo{author}{\bibfnamefont{M.}~\bibnamefont{Woolley}},
  \bibinfo{author}{\bibfnamefont{G.}~\bibnamefont{Milburn}}, \bibnamefont{and}
  \bibinfo{author}{\bibfnamefont{C.~M.} \bibnamefont{Caves}}
  (\bibinfo{year}{2008}), \eprint{arXiv:0804.4540}.

\bibitem[{\citenamefont{Bouten et~al.}(2007{\natexlab{a}})\citenamefont{Bouten,
  van Handel, and James}}]{Bouten:2006a}
\bibinfo{author}{\bibfnamefont{L.}~\bibnamefont{Bouten}},
  \bibinfo{author}{\bibfnamefont{R.}~\bibnamefont{van Handel}},
  \bibnamefont{and} \bibinfo{author}{\bibfnamefont{M.}~\bibnamefont{James}},
  \bibinfo{journal}{SIAM J. Control Optim.} \textbf{\bibinfo{volume}{46}},
  \bibinfo{pages}{2199} (\bibinfo{year}{2007}{\natexlab{a}}).

\bibitem[{\citenamefont{Verstraete et~al.}(2001)\citenamefont{Verstraete,
  Doherty, and Mabuchi}}]{Verstraete:2001a}
\bibinfo{author}{\bibfnamefont{F.}~\bibnamefont{Verstraete}},
  \bibinfo{author}{\bibfnamefont{A.~C.} \bibnamefont{Doherty}},
  \bibnamefont{and} \bibinfo{author}{\bibfnamefont{H.}~\bibnamefont{Mabuchi}},
  \bibinfo{journal}{Phys. Rev. A.} \textbf{\bibinfo{volume}{64}},
  \bibinfo{pages}{032111} (\bibinfo{year}{2001}).

\bibitem[{\citenamefont{Geremia et~al.}(2003)\citenamefont{Geremia, Stockton,
  Doherty, and Mabuchi}}]{Geremia:2003a}
\bibinfo{author}{\bibfnamefont{J.}~\bibnamefont{Geremia}},
  \bibinfo{author}{\bibfnamefont{J.~K.} \bibnamefont{Stockton}},
  \bibinfo{author}{\bibfnamefont{A.~C.} \bibnamefont{Doherty}},
  \bibnamefont{and} \bibinfo{author}{\bibfnamefont{H.}~\bibnamefont{Mabuchi}},
  \bibinfo{journal}{Phys. Rev. Lett.} \textbf{\bibinfo{volume}{91}},
  \bibinfo{pages}{250801} (\bibinfo{year}{2003}).

\bibitem[{\citenamefont{Belavkin}(1999)}]{Belavkin:1999a}
\bibinfo{author}{\bibfnamefont{V.}~\bibnamefont{Belavkin}},
  \bibinfo{journal}{Rep. Math. Phys.} \textbf{\bibinfo{volume}{43}},
  \bibinfo{pages}{405} (\bibinfo{year}{1999}).

\bibitem[{\citenamefont{Kalman}(1960)}]{Kalman:1960a}
\bibinfo{author}{\bibfnamefont{R.}~\bibnamefont{Kalman}},
  \bibinfo{journal}{Trans. ASME Ser. D. J. Basic Eng.}
  \textbf{\bibinfo{volume}{82}}, \bibinfo{pages}{35} (\bibinfo{year}{1960}).

\bibitem[{\citenamefont{Kalman and Bucy}(1961)}]{Kalman:1961a}
\bibinfo{author}{\bibfnamefont{R.~E.} \bibnamefont{Kalman}} \bibnamefont{and}
  \bibinfo{author}{\bibfnamefont{R.}~\bibnamefont{Bucy}},
  \bibinfo{journal}{Trans. ASME Ser. D. J. Basic Eng.}
  \textbf{\bibinfo{volume}{83}}, \bibinfo{pages}{95} (\bibinfo{year}{1961}).

\bibitem[{\citenamefont{Jacobs}(2007)}]{Jacobs:2006a}
\bibinfo{author}{\bibfnamefont{K.}~\bibnamefont{Jacobs}},
  \bibinfo{journal}{Quantum Information \& Computation}
  \textbf{\bibinfo{volume}{7}}, \bibinfo{pages}{127} (\bibinfo{year}{2007}).

\bibitem[{\citenamefont{van Handel}(2008)}]{vanHandel:2008a}
\bibinfo{author}{\bibfnamefont{R.}~\bibnamefont{van Handel}},
  \bibinfo{journal}{arXiv}  (\bibinfo{year}{2008}), \eprint{arXiv:0709.2216}.

\bibitem[{\citenamefont{Doucet et~al.}(2001)\citenamefont{Doucet, de~Freitas,
  and Gordon}}]{Doucet:2001}
\bibinfo{editor}{\bibfnamefont{A.}~\bibnamefont{Doucet}},
  \bibinfo{editor}{\bibfnamefont{N.}~\bibnamefont{de~Freitas}},
  \bibnamefont{and} \bibinfo{editor}{\bibfnamefont{N.}~\bibnamefont{Gordon}},
  eds., \emph{\bibinfo{title}{Sequential Monte Carlo Methods in Practice}}
  (\bibinfo{publisher}{Springer}, \bibinfo{year}{2001}).

\bibitem[{\citenamefont{Arulampalam et~al.}(2002)\citenamefont{Arulampalam,
  Maskell, Gordon, and Clapp}}]{Arulampalam:2002a}
\bibinfo{author}{\bibfnamefont{M.}~\bibnamefont{Arulampalam}},
  \bibinfo{author}{\bibfnamefont{S.}~\bibnamefont{Maskell}},
  \bibinfo{author}{\bibfnamefont{N.}~\bibnamefont{Gordon}}, \bibnamefont{and}
  \bibinfo{author}{\bibfnamefont{T.}~\bibnamefont{Clapp}},
  \bibinfo{journal}{IEEE Transactions on Signal Processing}
  \textbf{\bibinfo{volume}{50}}, \bibinfo{pages}{174} (\bibinfo{year}{2002}).

\bibitem[{\citenamefont{van Handel et~al.}(2005)\citenamefont{van Handel,
  Stockton, and Mabuchi}}]{vanHandel:2005a}
\bibinfo{author}{\bibfnamefont{R.}~\bibnamefont{van Handel}},
  \bibinfo{author}{\bibfnamefont{J.~K.} \bibnamefont{Stockton}},
  \bibnamefont{and} \bibinfo{author}{\bibfnamefont{H.}~\bibnamefont{Mabuchi}},
  \bibinfo{journal}{J. Opt. B: Quantum Semiclass. Opt.}
  \textbf{\bibinfo{volume}{7}}, \bibinfo{pages}{179} (\bibinfo{year}{2005}).

\bibitem[{\citenamefont{Bouten et~al.}(2007{\natexlab{b}})\citenamefont{Bouten,
  Stockton, Sarma, and Mabuchi}}]{Bouten:2007a}
\bibinfo{author}{\bibfnamefont{L.}~\bibnamefont{Bouten}},
  \bibinfo{author}{\bibfnamefont{J.}~\bibnamefont{Stockton}},
  \bibinfo{author}{\bibfnamefont{G.}~\bibnamefont{Sarma}}, \bibnamefont{and}
  \bibinfo{author}{\bibfnamefont{H.}~\bibnamefont{Mabuchi}},
  \bibinfo{journal}{Phys. Rev. A} \textbf{\bibinfo{volume}{75}},
  \bibinfo{eid}{052111} (\bibinfo{year}{2007}{\natexlab{b}}).

\bibitem[{\citenamefont{Adler et~al.}(2001)\citenamefont{Adler, Brody, Brun,
  and Hughston}}]{Adler:2001a}
\bibinfo{author}{\bibfnamefont{S.~L.} \bibnamefont{Adler}},
  \bibinfo{author}{\bibfnamefont{D.~C.} \bibnamefont{Brody}},
  \bibinfo{author}{\bibfnamefont{T.~A.} \bibnamefont{Brun}}, \bibnamefont{and}
  \bibinfo{author}{\bibfnamefont{L.~P.} \bibnamefont{Hughston}},
  \bibinfo{journal}{J. Phys. A} \textbf{\bibinfo{volume}{34}},
  \bibinfo{pages}{8795} (\bibinfo{year}{2001}).

\bibitem[{\citenamefont{Yanagisawa}(2--7)}]{Yanagisawa:2007a}
\bibinfo{author}{\bibfnamefont{M.}~\bibnamefont{Yanagisawa}}
  (\bibinfo{year}{2--7}), \eprint{arXiv:0711.3885}.

\bibitem[{\citenamefont{Gantmakher}(2000)}]{Gantmakher:2000a}
\bibinfo{author}{\bibfnamefont{F.~R.} \bibnamefont{Gantmakher}},
  \emph{\bibinfo{title}{The Theory of Matrices}} (\bibinfo{publisher}{American
  Mathematical Society}, \bibinfo{year}{2000}).

\bibitem[{\citenamefont{Kloeden and Platen}(1999)}]{Kloeden:1992a}
\bibinfo{author}{\bibfnamefont{P.~E.} \bibnamefont{Kloeden}} \bibnamefont{and}
  \bibinfo{author}{\bibfnamefont{E.}~\bibnamefont{Platen}},
  \emph{\bibinfo{title}{Numerical Solution of Stochastic Differential
  Equations}} (\bibinfo{publisher}{Springer}, \bibinfo{year}{1999}).

\bibitem[{\citenamefont{Liu and West}(2001)}]{Liu:2001a}
\bibinfo{author}{\bibfnamefont{J.}~\bibnamefont{Liu}} \bibnamefont{and}
  \bibinfo{author}{\bibfnamefont{M.}~\bibnamefont{West}},
  \emph{\bibinfo{title}{Combined parameter and state estimation in
  simulation-based filtering}}, chap.~\bibinfo{chapter}{10}, in
  \cite{Doucet:2001} (\bibinfo{year}{2001}).

\end{thebibliography}

\end{document}